%
%
%

\documentclass{svproc}
%
%

\usepackage{url}

\usepackage[utf8]{inputenc}
\usepackage[T1]{fontenc}

\usepackage{amsmath,latexsym,amssymb,graphicx,dsfont,amsfonts}
\usepackage{algorithmic}
\usepackage{graphicx}
\usepackage{textcomp}
\usepackage{xcolor}

\begin{document}
\mainmatter              
\title{On Fairness in Voting Consensus Protocols}
\titlerunning{Fairness in Voting Consensus Protocols}  
%
\author{Sebastian M\"{u}ller\inst{1} \and Andreas Penzkofer\inst{2} \and
Darcy Camargo\inst{2} \and Olivia Saa\inst{2}}
\authorrunning{M\"uller, Penzkofer, Camargo, Saa} 
%
%
\institute{Aix Marseille Universit\'e, CNRS, Centrale Marseille, I2M - UMR 7373, 13453 Marseille, France,\\
\email{sebastian.muller@univ-amu.fr}
\and
IOTA Foundation, 10405 Berlin, Germany\\
\email{ \{andreas.penzkofer,  darcy.camargo, olivia.saa\}@iota.org}}

\maketitle              

\begin{abstract}
Voting algorithms have been widely used as consensus protocols in the realization of fault-tolerant systems. These algorithms are best suited for distributed systems of nodes with low computational power or heterogeneous networks, where different nodes may have different levels of reputation or weight.  Our main contribution is the construction of a \emph{fair} voting protocol in the sense that the influence of the eventual outcome of a given participant is linear in its weight.  Specifically, the fairness property guarantees that any node can actively participate in the consensus finding even with low resources or weight.
We investigate effects that may arise from weighted voting, such as loss of anonymity, centralization, scalability, and discuss their relevance to protocol design and implementation. 
\keywords{fairness, voting consensus protocols, heterogeneous network, Sybil attack}
\end{abstract}

\section{Introduction} 
\subsection{Preliminaries}
Weighted voting in distributed systems is known to increase efficiency 
and network reliability, 
but it also raises additional risks. 
It deviates from the \emph{one node - one vote}-principle  and  allows a defense against Sybil attacks if the weight corresponds to \emph{resources} that can be verified via resource testing or recurring costs and fee, e.g.,~\cite{Neil_asurvey}. However, weighted voting may induce a loss of  anonymity for the nodes and may incentivize centralization.

In most of the systems that are based on weighted voting, all participants are allowed to vote, and a centralized entity counts the votes and takes a decision. In the decentralized setting, participants or nodes have to find consensus on the outcome of the vote.  In the case where the number of nodes is high, and not all members can communicate with all other nodes, protocols where nodes only sample a certain number of nodes can be used to aggregate information, \cite{MoNeTa:14}, or find consensus, \cite{GaKuLe:78,MoDiAm:04,KaMo:07,cruise2013probabilistic,GoMaMaBe:15,fpc,fpcsim}.

This article focuses on a certain class of voting consensus protocols, namely, binary majority consensus. Some basic algorithms in this protocol class are simple majority consensus, \cite{MoNeTa:14}, Gacs-Kurdyumov-Levin \cite{GaKuLe:78}, and random neighbors majority, \cite{GoMaMaBe:15}.
The basic idea of these majority voting protocols is that nodes query other nodes about their current opinion, and adjust their own opinion throughout several rounds based on the proportion of other opinions they have observed. 

Despite these facts, they have been successfully applied not only in decision making but also in a wide range of engineering and economic applications \cite{banisch2010,niu2015,przyby2011},  and lead to the emerging science of sociophysics~\cite{castellano2009}.  
Recently, \cite{fpc} introduced the fast probabilistic consensus (FPC) protocol, an amelioration of the classical consensus voting protocols that is efficient and robust in Byzantine infrastructure. This protocol is studied in more detail in \cite{fpcsim} and serves this note as a reference for a voting consensus protocol.

\subsection{Weights and fairness}
The main contribution of this work is to define an adaption of the majority consensus protocol to a setting that allows nodes to have different weights. We define \emph{voting power}, Section \ref{sec:fairness}, as a power index to describe the influence of each node on the outcome of the consensus protocol. 
In comparison with standard voting models where the whole population is sampled, we have two instances where the weights may come in: firstly, in the sampling of the nodes to query and secondly, in the weighting of the votes to apply the majority rule. An essential consequence of fairness of the protocol is that it allows  defense against Sybil attacks.  In a decentralized and permissionless setting,  a malicious actor may gain a disproportionately large influence on the voting by creating a large number of pseudonymous identities that can increase its share of the voting power.

Moreover, fairness allows even nodes with very few resources or weight to participate in the consensus finding. This property is particularly important in networks where the sum of the weights of nodes with small weight is considerable. 

Besides technical and economic considerations, we want to mention the possible social impacts of an unfair system. For instance, unfair situations can make participants of the network unhappy, and this should be a real consideration in judging  the efficiency and adaption of a protocol, e.g., \cite{Rabin:91}. In particular, this is of importance in community-driven projects such as IOTA.

\subsection{IOTA}
IOTA is an open-source distributed ledger and cryptocurrency designed for the Internet of things. For the next generation of the IOTA protocol \cite{coordicide} introduces \textit{mana} in various places in order to obtain fairness. For instance, the protocol uses mana to obtain fairness in rate control, \cite{ViWeGaDi:19,vigneri2020}, where adaptive difficulty property guarantees that any node, even with low hashing power, can achieve similar throughput for given mana. This note discusses how mana is used as a weight in FPC to construct a fair consensus protocol. 


\subsection{Outline}
The rest of the paper organizes as follows. After giving an overview of previous work in Section \ref{sec:SoA}, we give a brief introduction to voting consensus protocol and FPC in \ref{sec:FPC}. The main contribution is the formulation of a proper mathematical framework in Section \ref{sec:fairness} and the construction of  a weighted voting consensus protocol that is \emph{fair} in the sense that the voting power is proportional to the weights of the nodes. Section \ref{sec:Zipf} proposes modeling of the weight distribution using a Zipf law. Under the validity of the Zipf law, we discuss in  Section \ref{sec:messageComplexity} impacts of the weighted voting on scalability and implementation of the protocol. Section \ref{sec:simresults} present some simulation results that show the behavior of the protocol in Byzantine infrastructure for different degrees of centralization of the weights. We conclude in Section \ref{sec:discussion} with a discussion.

\section{Related work}\label{sec:SoA}
\subsection{Weighted voting consensus protocols}
Voting consensus protocols (without weights) are widely studied in theory and applications, and they play an important role in social learning. Also, weighted voting systems have a long history in election procedures,  and often one is interested in measuring the influence of the power of the given participants, \cite{Ga:1994}.
Despite these facts, we were not able to find related results on voting consensus protocols with weights. Recently, \cite{MuMa:20} describes a model that is related to FPC, and that considers biased or stubborn agents. For the sake of brevity, we refer to  \cite{fpc,fpcsim} for more details on related work and references therein. 
\subsection{Fairness}
Fairness plays a prominent role in many areas of science and applications. It is, therefore, not astonishing that it plays its part also in DLT. For instance, PoW in Nakamoto consensus ensures that the probability of creating a new block is proportional to the computational power of a node; see \cite{ChPaCr:19} for an axiomatic approach to block rewards and further references. In PoS-based blockchains, the probability of creating a new block is usually precisely proportional to the node's balance. However, this does not always have to be the \emph{optimal} choice, \cite{PopovNxt,LeRePi:20}.

\section{Voting consensus protocols}\label{sec:FPC}
We give a brief definition of binary voting protocols in this section. More details can be found, for instance, in \cite{GaKuLe:78,MoDiAm:04,KaMo:07,cruise2013probabilistic,GoMaMaBe:15}. We refer to \cite{fpc,fpcsim} for more information on the fast probabilistic consensus (FPC).

To define the protocol accurately, we need some notation.
We assume the network to have $N$ nodes indexed by $1,2,\ldots, N$, and that every node can query any other nodes. We make this assumption for the sake of a better presentation; a node does not need to know every other node in the network. In fact, simulation studies \cite{GoMaMaBe:15,fpcsim} indicate that it is sufficient if every node knows about half of the other nodes. Moreover, it seems to be a reasonable assumption that nodes with high weights are known to every participant in the network.  Every node $i$ has an opinion or state. We note $s_{i}(t)$ for the opinion of the node $i$ at time $t$. Opinions take values in $\{0,1\}$. Every node $i$ has an initial opinion $s_{i}(0)$.

At each (discrete) time step each node chooses $k$ random nodes $C_{i}=C_i(t)$ and queries their opinions. Denote by $k_{i}(t)\leq k$  the number of replies received by node $i$ at time $t$ and set $s_j(t)=0$ if the reply from $j$ is not received in due time.  The updated mean opinion is then 
\begin{equation*}
\eta_{i}(t+1)=\frac1{k_{i}(t)} \sum_{j\in C_{i}} s_{j}(t).
\end{equation*}
Note that the neighbors $C_{i}$ of a node $i$ are chosen using sampling with replacement, and hence repetitions are possible.

As in \cite{fpcsim} we consider a basic version of the FPC introduced in \cite{fpc} in choosing some parameters by default. Specifically, we remove the cooling phase of FPC and the randomness of the initial threshold $\tau$. Let $U_{t}$, $t=1, 2,\ldots$ be i.i.d.~random variables with law $\mathrm{Unif}( [\beta, 1-\beta])$ for some parameter $\beta \in [0,1/2]$. The update rules for the opinion of a node $i$ is then given by
\begin{equation*}
s_{i}(1)=\left\{ \begin{array}{ll}
1, \mbox{ if } \eta_{i}(1) \geq \tau, \\
0, \mbox{ otherwise,}
\end{array}\right.. 
\end{equation*}

For subsequent rounds, i.e.\ $t\geq 1$:
\begin{equation*}
s_{i}(t+1)=\left\{ \begin{array}{ll}
1, \mbox{ if } \eta_{i}(t+1) > U_{t}, \\
0, \mbox{ if } \eta_{i}(t+1) < U_{t}, \\
s_{i}(t), \mbox{ otherwise.}
\end{array}\right. 
\end{equation*}
Note that if $\tau=\beta=0.5$, FPC reduces to a standard majority consensus. An asymmetric choice of $\tau$, $\tau\neq 0.5$ allows the protocol the distinction between the two kinds of integrity failure, \cite{fpc,fpcsim}. It is important, that the above sequence of random variables $U_t$ are the same for all nodes; we refer to \cite{fpcsim} for a more detailed discussion on the use of decentralized random number generators.

In contrast to many theoretical papers on majority dynamics, a local termination rule is needed for practical applications. Every node keeps a counter variable \verb?cnt? that is incremented by $1$ if there is no change in its opinion, and that is set to $0$ if there is a change of opinion. Once the counter reaches a certain threshold $\verb?l?$, i.e.,~$\verb?cnt?\geq \verb?l?$,  the node considers the current state as final. 
In the absence of autonomous termination the algorithm is halted after $\verb?maxIt?$ iterations. 
\vspace{0.9cm}
\section{Fairness}\label{sec:fairness}
In this section, we propose a proper mathematical framework of fairness. 
We consider a network of $N$ nodes whose  weight is described by $\{m_1,..,m_N\}$ with $\sum^N_{i=1} m_i=1$. In the sampling of the queries a  node $j$ is chosen now with probability  
\begin{equation}
p_j=\frac{f(m_j)}{\sum_{i=1}^N f(m_i)}.
\end{equation}
Each opinion is weighted by $g_j=g(m_j)$, resulting in the value
\begin{equation}
\eta_i(t+1)= \frac{1}{\sum_{j\in C_i} g_j} \sum_{j\in C_i} g_j s_j(t).
\end{equation}
The other parts of the protocol remain unchanged. 

We denote by $y_i$ the number of times a node $i$ is chosen. As the sampling is described by a multinomial distribution we can calculate the expected value of a query as
\vspace{-0.1cm}
\begin{equation}
\mathbb{E}\eta(t+1)=\sum_{i=1}^N s_i(t)v_i,
\end{equation}
\vspace{-0.1cm}
where
\begin{equation}
v_i=\sum_{\textbf{y}\in \mathbb{N}^N:\sum{y_i}=k}
\frac{k!}{y_1!\cdot\cdot\cdot y_N!}
\frac{y_i g_i }{\sum_{n=1}^N y_n g_n}
\prod^N_{j=1}p_j^{y_j}
\end{equation}
is called the \emph{voting power} of node $i$. The voting power measures the influence of the node $i$.  We would like the voting power to be proportional to the weight since this would induce a robustness to splitting and merging, \cite{LeSt:19}.
\begin{definition}[Robust to splitting  and Sybil attacks]
A voting scheme is \emph{robust to Sybil attacks} if a node $i$ splits into nodes $i_1$ and $i_2$ with a weight splitting ratio $x\in (0,1)$, then
\begin{equation}\label{eq:Sybil}
   v_i(m_i)\geq v_{i_1}(xm_i)+v_{i_2}((1-x)m_i).
\end{equation}
\end{definition}

\begin{definition}[Robust to merging]
A voting scheme is \emph{robust to merging} if a node $i$ splits into nodes $i_1$ and $i_2$ with a weight splitting ratio $x\in (0,1)$, then
\begin{equation}\label{eq:mere}
   v_i(m_i)\leq v_{i_1}(xm_i)+v_{i_2}((1-x)m_i).
\end{equation}
\end{definition}

\begin{definition}[Fairness]
A voting scheme $(f,g)$ is \emph{fair} if it is robust to Sybil attacks and robust to merging. In other words, if a node $i$ splits into nodes $i_1$ and $i_2$ with a weight splitting ratio $x\in (0,1)$, then
\begin{equation}\label{eq:fair}
   v_i(m_i)=v_{i_1}(xm_i)+v_{i_2}((1-x)m_i).
\end{equation}
\end{definition}
In the case where $g\equiv 1$, i.e.,~the $\eta$ is an unweighted mean, we show the existence of a voting scheme that is fair for all possible choices of $k$ and weight distributions.
\begin{theorem}\label{thm1}
For $g\equiv 1$ the voting scheme $(f,g)$ is fair if and only if $f$ is the identity function $f=id$.
\end{theorem}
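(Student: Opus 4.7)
The plan is to simplify the voting power formula in the case $g\equiv 1$, reduce the fairness condition to a functional equation on $f$, and then invoke Cauchy's additive equation.

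First I would substitute $g_j=1$ into the definition of $v_i$. Since the multinomial sum is indexed over $\mathbf{y}$ with $\sum_n y_n=k$, the factor $y_ig_i/\sum_n y_n g_n$ collapses to $y_i/k$. Pulling $1/k$ out of the sum gives $v_i=\frac{1}{k}\mathbb{E}[y_i]$, and since $(y_1,\ldots,y_N)$ is multinomial with parameters $(k;p_1,\ldots,p_N)$, we get $v_i=p_i=f(m_i)/\sum_j f(m_j)$. So when $g\equiv 1$ the voting power is exactly the sampling probability.

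Next I would plug this into the fairness identity \eqref{eq:fair}. Writing $S=\sum_{j\neq i}f(m_j)$, the left side is $f(m_i)/(S+f(m_i))$ and the right side is $\bigl(f(xm_i)+f((1-x)m_i)\bigr)/\bigl(S+f(xm_i)+f((1-x)m_i)\bigr)$. Cross-multiplying, the $S\cdot(\text{product})$ and the constant-product terms cancel, leaving
\begin{equation*}
f(m_i)\;=\;f(xm_i)+f\bigl((1-x)m_i\bigr).
\end{equation*}
Since $m_i\in(0,1)$ and $x\in(0,1)$ are arbitrary, writing $a=xm_i$ and $b=(1-x)m_i$ this is precisely Cauchy's additive functional equation $f(a+b)=f(a)+f(b)$ on positive reals.

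The main obstacle is that Cauchy's equation famously admits pathological (non-measurable) solutions in general, so I would invoke a standard regularity assumption on $f$ — monotonicity, measurability, or boundedness on some interval are all natural in the present setting, since $f$ is meant to represent a weighting of resources and the $m_i$ live in $[0,1]$. Under any such hypothesis the only solutions are $f(t)=ct$ for a constant $c>0$, and since only the ratios $f(m_i)/\sum_j f(m_j)$ matter in defining $p_i$, this constant is irrelevant and we may normalize to $f=\mathrm{id}$. The converse direction is immediate: if $f=\mathrm{id}$, then using $\sum_j m_j=1$ we have $v_i=m_i$, and $xm_i+(1-x)m_i=m_i$ verifies \eqref{eq:fair}. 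This completes both directions.
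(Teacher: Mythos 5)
Your proposal is correct and follows essentially the same route as the paper: reduce $v_i$ to $p_i=f(m_i)/\sum_j f(m_j)$ via the multinomial mean, cross-multiply the fairness identity to obtain $f(m_i)=f(xm_i)+f((1-x)m_i)$, and conclude linearity. The only difference is that you explicitly flag the need for a regularity hypothesis (monotonicity or measurability) to exclude pathological solutions of Cauchy's equation, and you verify the converse direction — both points the paper's own proof glosses over — so your write-up is, if anything, slightly more careful.
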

\begin{proof}
We consider $g\equiv 1$. In this case we can simplify (\ref{eq:fair}) to 
$$v_i=\frac{1}{k}\sum_{\textbf{y}\in \mathbb{N}^N:\sum {y_i}=k}y_i\mathbb{P}[\textbf{y}],$$
where $\textbf{y}$ follows a multinomial distribution using the probability vector $\{p_j\}$ and making $k$ selections. Hence, we obtain
\begin{equation}
v_i(m_i)=p_i=\frac{f(m_i)}{\sum_{j=1}^N f(m_j)}.
\end{equation}
Now, let $S=\sum_{j=1}^N f(m_j)$ and 
\begin{equation}
\Delta_x = f(x m_i)+ f((1-x) m_i)- f(m_i).
\end{equation}
The fairness condition (\ref{eq:fair}) becomes
\begin{equation}
    \frac{f(m_i)}{S} = \frac{f(x m_i)}{S+\Delta_x} + \frac{f((1-x)m_i)}{S+\Delta_x},\quad \forall x\in(0,1),
\end{equation}
which is equivalent to
\begin{equation}
    \Delta_x f(m_i) = S \Delta_x,\quad \forall x\in(0,1).
\end{equation}
This means that either $f(m_i)=S$, meaning that there is only one node, or that $\Delta_x\equiv 0$, meaning that for all $x\in (0,1)$ we have that
\begin{equation}
    f(m)=f(xm)+f((1-x)m),
\end{equation}
and that $f$ is a linear function. 
\end{proof}

On the other hand, if the nodes are queried at random without weighting the probability for selection, i.e., $f\equiv 1$, then there exists no voting scheme that is fair for all $k$.
\begin{theorem}\label{thm2}
For $f\equiv1$ there exists no voting scheme $(f,g)$ with  for all $g_i>0$ that is fair for all $k$.
\end{theorem}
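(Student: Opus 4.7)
My plan is to exhibit a single value of $k$ at which fairness must fail irrespective of $g$, and the natural candidate is $k=1$. The driving observation is that with only one query the weighted mean $\eta$ collapses to the opinion of the single sampled node, because the weights appearing in the numerator and denominator of $y_i g_i / \sum_n y_n g_n$ cancel whenever the support of $\mathbf{y}$ is a singleton. Hence $k=1$ is degenerate in $g$, and I would exploit this to neutralize any choice of $g$ in a single stroke.

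The first concrete step is to compute the voting power explicitly in this case. Under $f\equiv 1$ the sampling distribution is uniform, so the only admissible multinomial outcomes are $\mathbf{y}=e_l$, each with probability $1/N$. For each such outcome, the ratio appearing in the definition of $v_i$ equals $1$ when $l=i$ and $0$ otherwise, independently of $g$. Summing gives $v_i = 1/N$ for every node. Repeating the same computation after a split of node $i$ into $i_1,i_2$ produces a network of $N+1$ nodes and voting powers $1/(N+1)$ for every node, again regardless of how the split weight is distributed or what $g$ does.

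The final step is to confront this with the fairness condition $v_i = v_{i_1}+v_{i_2}$, which becomes $1/N = 2/(N+1)$ and admits no solution for $N \geq 2$. Because no property of $g$ was used beyond making the definitions well posed, this rules out every admissible choice simultaneously, and so no pair $(1,g)$ can be fair for all $k$. I do not anticipate a real obstacle here: the whole argument reduces to noticing that $k=1$ is degenerate in $g$. Had the theorem instead asked about fairness at a fixed $k \geq 2$, a genuinely combinatorial comparison of multinomial expectations of the form $\mathbb{E}[y_i g_i / \sum_n y_n g_n]$ before and after the split would be unavoidable, in stark contrast to the clean one-line reduction that suffices here.
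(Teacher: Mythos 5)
Your proof is correct, and it takes a genuinely different route from the paper's. You specialize to $k=1$, where the ratio $y_i g_i/\sum_n y_n g_n$ degenerates to an indicator and $g$ cancels entirely, so that under $f\equiv 1$ every node's voting power is exactly $1/N$; the fairness identity then reads $1/N=2/(N+1)$, which fails for every $N\geq 2$ regardless of $g$. The paper instead fixes a two-node network with weights $2/3$ and $1/3$, writes the heavy node's voting power as $\mathbb{E}\bigl[Xg(2/3)/(Xg(2/3)+(k-X)g(1/3))\bigr]$ with $X\sim\mathcal{B}(k,1/2)$, and uses the law of large numbers, dominated convergence and Jensen's inequality to show this quantity is strictly below its limit $g(2/3)/(g(2/3)+g(1/3))$ for every finite $k$ yet converges to it, hence is not constant in $k$, contradicting the constant value that fairness would force. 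Your argument is shorter, entirely elementary, and uniform in $g$: it needs no probabilistic limit theorems and no assumption such as $g(2/3)>g(1/3)$, which the paper's strict Jensen step does invoke. What it buys less of is information about nondegenerate quorum sizes: the failure is exhibited only at $k=1$, so it refutes the literal statement ``fair for all $k$'' but says nothing about $k\geq 2$, whereas the paper's argument shows the voting power genuinely varies with $k$ and so fairness fails for all but at most one quorum size. If ``for all $k$'' were intended to range only over $k\geq 2$, your proof would need to be supplemented along the paper's lines; as the theorem is stated, it is a complete and valid proof.
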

\begin{proof}
In the case of $f\equiv 1$ the voting power simplifies to
\begin{equation}
    v(m_i)=\frac{1}{N^k}\sum_{\textbf{y}\in \mathbb{N}^N:\sum{y_i}=k} \frac{k!}{y_1!\cdot\cdot\cdot y_N!}
\frac{ y_i g_i }{\sum_{n=1}^N y_n g_n}.\\
\end{equation}
We will consider a special situation that does not satisfy the fairness condition. We consider the situation with $N=2$ nodes and $m_1=2/3$ and $m_2=1/3.$ The voting power of the first node is then
\begin{equation}\label{eq:thm2}
    v\left(\frac23\right)=\mathbb{E} \left[ \frac{X g(\frac23)}{X g(\frac23) + (k-X) g(\frac13)} \right],
\end{equation}
where $X$ follows a binomial distribution $\mathcal{B}(k,1/2).$
Now, it suffices to show that the right hand side of Equation (\ref{eq:thm2}) is not constant in $k$. Due to the law of large numbers and the dominated convergence theorem we have that
\begin{equation}
    \mathbb{E} \left[ \frac{X g(\frac23)}{X g(\frac23) + (k-X) g(\frac13)} \right] \stackrel{k \to \infty}{\longrightarrow} \frac{ g(\frac23)}{g(\frac23)+ g(\frac13)}.
\end{equation}
We conclude by observing  that
\begin{equation}
    \mathbb{E} \left[ \frac{X g(\frac23)}{X g(\frac23) + (k-X) g(\frac13)} \right] < \frac{ g(\frac23)}{g(\frac23)+ g(\frac13)},
\end{equation}
where the strict inequality follows from Jensen's Inequality using that $X$ is not a constant almost surely and $g(\frac23)>g(\frac13).$ 
\end{proof}
For these reasons we fix from now on $g\equiv 1$ and $f=id$.

\section{Distribution of weight}\label{sec:Zipf}
The existence of a fair voting scheme is independent of the actual distribution of weight. However, to make a more detailed prediction of the qualities of the voting consensus protocol, it may be appropriate to make assumptions on the weights. 

Probably the most appropriate modelings of the weight distributions rely on universality phenomena. 
The most famous  example of this universality phenomenon is the central limit theorem. While the central limit is suited to describe statistics where values are of the same order of magnitude, it is not appropriate to model more heterogeneous situations where the values might differ in several orders of magnitude. For instance,  values in (crypto-)currency systems are not distributed equally; \cite{btcdistribution}.  

Zipf's law and the closely related Pareto distribution describe mathematically various models of real-world systems. For instance, many economic models, \cite{wealth_pareto}, use these laws to describe the wealth and influence of participants.  These laws govern the asymptotic distribution of many statistics which
\begin{enumerate}
    \item take values as positive numbers;
    \item range over many different orders of magnitude;
    \item arise from a complicated combination of largely independent factors; and
    \item have not been artificially rounded, truncated, or otherwise constrained in size.
\end{enumerate}
The Zipf law is the appropriate variant for modeling the weight distribution and is defined as follows: The $n$th largest value $y(n)$ 
should obey an approximate power law, i.e.,~it should be approximately 
\begin{equation}
    y(n)=C n^{-s}
\end{equation}
 for the first few $n=1,2,3,\ldots$ and some parameters $s > 0$ and normalising constant $C$. We refer to \cite{Zipf} for an excellent introduction to this topic. 


A convenient way to observe a Zipf law is by plotting the data on a log-log graph, with the axes being log(rank order) and log(value). The data conforms to a Zipf law to the extent that the plot is linear, and the value of $s$ can be found using linear regression. 
For instance, Fig. \ref{fig:IOTAzipf} shows the distribution of IOTA for the top 100 richest addresses with a fitted Zipf law.

Due to the universality phenomenon, the plausibility of hypotheses 1) - 4) above, and Fig. \ref{fig:IOTAzipf}, we assume, in the following sections, a Zipf law for the weight distribution. 
\begin{figure}
    \center
    \includegraphics[width=0.8\textwidth,trim={0 0 0 0cm},clip]{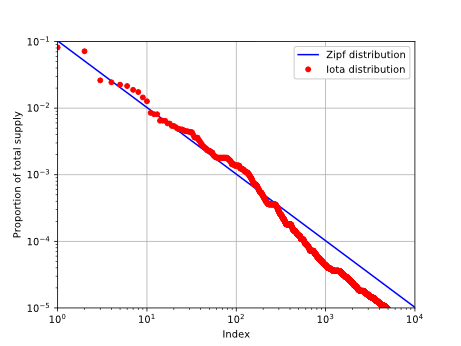}
    \caption{\small Distribution of relative IOTA value on the top $10.000$ addresses 
    with a fitted Zipf law with $s=1$; as of July $2020$.}
    \label{fig:IOTAzipf}
\end{figure}

\section{Scalability and Message complexity}\label{sec:messageComplexity}
An essential property of voting consensus protocols is their scalability. In fact, at every round, every node is queried on average $k$ times indifferent to the network size. In our proposed fair voting schemes where nodes are sampled proportional to their weight, this is no longer true, and nodes with higher weight are queried more often. This affects the scalability of the protocol and might generate feedback on the weight distribution.

\begin{lemma}\label{lem:complex1}
We assume the weights to follow a Zipf law with parameters $(s,N)$. Then, the average number of queries a node  of rank $h(N)$ receives per round  is of order (as $N\to\infty)$ 
\begin{enumerate}
    \item $\Theta(N^s h(N)^{-s})$, if $s<1$;
    \item $\Theta(\frac{N}{\log{N}}h(N)^{-1})$, if $s=1$;
    \item $\Theta(N h(N)^{-s})$, if $s=1$.
\end{enumerate}
\end{lemma}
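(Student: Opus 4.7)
My plan is to reduce the claim to the asymptotics of the generalized harmonic number $H_N^{(s)} := \sum_{j=1}^N j^{-s}$. The key observation is that once we have fixed $f = \mathrm{id}$ and $g \equiv 1$ as prescribed at the end of Section \ref{sec:fairness}, the probability that an individual query falls on the node of rank $j$ equals $p_j = m_j$, because the weights are already normalised to sum to one. Since every one of the $N$ nodes issues $k$ queries per round, by linearity of expectation the expected number of queries received by the node of rank $h = h(N)$ in a round is
\begin{equation*}
Q(h,N) \;=\; kN\,m_h \;=\; k N \cdot \frac{h^{-s}}{H_N^{(s)}}.
\end{equation*}

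The second step is to insert the standard integral-comparison estimates for $H_N^{(s)}$, splitting into the three cases of the lemma. For $s<1$, comparing $\sum_{j=1}^N j^{-s}$ with $\int_1^N x^{-s}\,dx$ gives $H_N^{(s)} = \tfrac{1}{1-s}N^{1-s} + O(1)$; for $s=1$ one has the classical $H_N^{(1)} = \log N + \gamma + o(1)$; and for $s>1$ the series converges, $H_N^{(s)} \to \zeta(s) \in (0,\infty)$. Substituting these three asymptotic expressions into $Q(h,N) = kN h^{-s}/H_N^{(s)}$ yields, in order, $\Theta(N^s h(N)^{-s})$, $\Theta\!\left(\tfrac{N}{\log N}\,h(N)^{-1}\right)$, and $\Theta(N h(N)^{-s})$, which are precisely the three bounds stated (reading the third case as $s>1$, correcting the evident typo).

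There is essentially no analytic obstacle beyond being careful about which term dominates in each regime and noting that $k$ and the constants $1/(1-s)$ or $1/\zeta(s)$ are absorbed into the $\Theta$. The only point worth flagging is that the statement is about an \emph{average} per round, so it should be read as the expectation with respect to the multinomial sampling, not a deterministic count; the argument above computes exactly this expectation and requires no concentration argument for the stated $\Theta$-bound.
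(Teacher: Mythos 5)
Your proof is correct and takes essentially the same route as the paper: the paper likewise writes the expected number of queries as $N\cdot h(N)^{-s}/\sum_{n=1}^N n^{-s}$ and reads off the three regimes from the standard asymptotics of the generalized harmonic sum (your version just makes the $H_N^{(s)}$ estimates and the constant factor $k$ explicit). You are also right that the third case in the statement should read $s>1$; that is a typo in the lemma.
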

\begin{proof}
At every round a node of rank $h(N)$ queried on average 
\begin{equation}
    N\cdot  \frac{h(N)^{-s}}{\sum_{n=1}^N n^{-s}}
\end{equation} times.
If $s< 1$ this becomes asymptotically $\Theta(N^s h(N)^{-s})$, if $s=1$ we obtain $\Theta(\frac{N}{\log{N}}h(N)^{-1})$, and if $s>1$ this is  $\Theta(N h(N)^{-s})$. 
\end{proof}

In particular, the node with the highest weight, i.e.,~$h(N)=1$, is queried $\Theta(N^s), \Theta(\frac{N}{\log{N}})$, or $\Theta(N)$ times, and might eventually be overrun by queries. Nodes whose rank is $\Theta(N)$ have to answer only $\Theta(1)$ queries. This is in contrast to the case $s=0$ where every node has the same weight, and every node is queried on average a constant number of times.

 Nodes with high weights are therefore incentivized to communicate their opinions through different communication channels, e.g.,\ to gossip their opinions on an underlying peering network and not to answer each query separately. Since not all nodes can gossip their opinions (in this case, every node would have to send  $\Omega(N)$ messages), we have to find a threshold when nodes gossip their opinions or not. 

If we assume that nodes with high weights have higher throughput than nodes with lower weights a reasonable threshold is $\log(N)$, i.e.,~only the $\Theta(\log(N))$ highest weights nodes do gossip their opinions, leading to $\Theta(\log{N})$ messages for each node in the gossip layer. In this case the expected number of queries the highest weight node, that is not allowed to gossip its opinions,  receives is  $\Theta((\frac{N}{\log{N}})^{s})$
if $s< 1$, $\Theta(\frac{N}{(\log{N})^2})$ if $s=1$, and $\Theta(\frac{N} {(\log{N})^{s}})$ if $s>1$.
In this case, nodes of rank between $\Theta(\log{N})$ and $\Theta(N)$ are the critical nodes for message complexity.  

A possible consequence might be that these \emph{middle rank} nodes might gossip their opinion leading to possible congestion of the network or might stop to answer queries leading to less security of the consensus protocol. A less selfish response of such a node could be to either split up the node into several nodes or to pool with other nodes to gossip their opinions according to the above threshold rule. This, however, might influence the distribution of weight. 

The above situation may apply well to heterogeneous networks, i.e., the computational power and bandwidths of the node might differ in several orders of magnitudes, \cite{ViWeGaDi:19,vigneri2020}.  In homogeneous networks the above issues may be solved by a \emph{fair} attribution of message complexity. 

\begin{lemma}
We assume the weights to follow a Zipf law with parameters $s$ and $N$. Then there exists a \emph{fair} threshold for gossiping opinions such that every node has to process the same order of messages. The message complexity for each node is $O(\sqrt{N})$ for all choices of $s$.
\end{lemma}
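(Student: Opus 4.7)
The plan is to parametrize the gossip layer by a threshold rank $h^\star=h^\star(N)$ so that the $h^\star$ highest-weight nodes gossip their opinions to the whole network while the remaining $N-h^\star$ nodes only answer direct queries, and then to choose $h^\star$ so as to equalize the gossip load and the maximum query load per round. Note that \emph{fair} here refers to fairness of message complexity, not voting fairness in the sense of Section~\ref{sec:fairness}.

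First, in any gossip scheme in which the top $h^\star$ nodes broadcast their opinions, every node must process $\Theta(h^\star)$ gossip messages per round (one opinion per gossiper has to reach every node). This already gives a uniform lower bound on the message load of every node, independently of whether the node belongs to the gossip layer or not.

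Second, once the top $h^\star$ ranks are removed from the query pool, a non-gossiping node of rank $h>h^\star$ receives on average
\begin{equation*}
Q(h;h^\star) \;\sim\; \frac{N k \, h^{-s}}{\sum_{n=h^\star+1}^N n^{-s}}
\end{equation*}
queries per round, with the maximum attained at $h=h^\star+1$. Applying the tail-sum asymptotics already used in Lemma~\ref{lem:complex1} to the renormalising sum in the denominator, one obtains $Q(h^\star;h^\star)\sim (N/h^\star)^s$ for $s<1$, $Q(h^\star;h^\star)\sim N/\bigl(h^\star\log(N/h^\star)\bigr)$ for $s=1$, and $Q(h^\star;h^\star)\sim N/h^\star$ for $s>1$.

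Third, the fair threshold is the solution of the balance equation $h^\star\sim Q(h^\star;h^\star)$, which yields $h^\star\sim N^{s/(s+1)}$ for $s<1$, $h^\star\sim\sqrt{N/\log N}$ for $s=1$, and $h^\star\sim\sqrt{N}$ for $s>1$. Since $s/(s+1)\le 1/2$ for $s\le 1$, in every regime the resulting per-node message load is $O(\sqrt{N})$, which proves the lemma. The main obstacle is the borderline case $s=1$: one has to verify that the ansatz $h^\star\sim\sqrt{N/\log N}$ is self-consistent, i.e.\ that $\log(N/h^\star)=\Theta(\log N)$, and then solve the resulting equation $(h^\star)^2\log N\sim N$. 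A minor loose end is checking $h^\star=o(N)$ in each regime so that the harmonic-tail asymptotics remain valid, but this is immediate from $h^\star=O(\sqrt{N})$.
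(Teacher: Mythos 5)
Your proof is correct and follows essentially the same strategy as the paper: choose the gossip threshold $h^\star$ by equating the per-round gossip load $\Theta(h^\star)$ with the maximal query load of a non-gossiping node, then check that the balance point is $O(\sqrt{N})$ in every regime of $s$. The one place you deviate is in renormalising the sampling probabilities over the tail $\{h^\star+1,\dots,N\}$ rather than over all $N$ nodes as in Lemma~\ref{lem:complex1}: for $s\le 1$ the two normalisations agree asymptotically, but for $s>1$ the tail sum $\sum_{n>h^\star}n^{-s}=\Theta((h^\star)^{1-s})$ vanishes rather than converging to $\zeta(s)$, so your balance equation yields $h^\star\sim\sqrt{N}$ where the paper's yields $h^\star\sim N^{1/(1+s)}$. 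This is a modelling choice (whether queries that would have gone to gossiping nodes are redirected to the remaining pool or are simply absorbed by the gossip), and under either convention the per-node load is $O(\sqrt{N})$, so the lemma holds; your explicit self-consistency check of the ansatz $h^\star\sim\sqrt{N/\log N}$ in the borderline case $s=1$ is in fact more careful than the paper's proof, which asserts the $O(\sqrt{N})$ bound for $s=1$ without carrying out that computation.
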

\begin{proof}
We have to find a threshold such that the maximal number of queries a node has to answer should equal the number of messages that are gossiped. For $s<1$ this leads to the following equation
\begin{equation}
N^s h(N)^{-s} = h(N)    
\end{equation}
and hence we obtain that a threshold of order $N^{\frac{s}{s+1}}$ leads to $\Theta(N^{\frac{s}{s+1}})$ messages for every node to send. For $s>1$ one obtains similarly a threshold of $N^{\frac1{1+s}}$ leading to $\Theta(N^{\frac1{1+s}})$ messages. In the worst case, i.e.,~$s=1$, the message complexity for each node in the network is $O(\sqrt{N})$. 
\end{proof}


\section{Simulations}\label{sec:simresults}
In this section, we present a short simulation study that shows how the performance of the FPC depends on the distribution of the weights. 

\subsection{Threat model}
We assume that an adversary holds a proportion $q$ of the total weight and that it splits its weight  equally between its $q N$ nodes such that each node holds $1/N$ of the total weight. We assume that the adversary is at every moment aware of all opinions and transmits at time $t+1$ the opinion of the mana-weighted minority of the honest nodes of step $t$. 
\subsection{Failures}
In the case of heterogeneous weight distributions, there are different possibilities to generalize the standard failures of consensus protocols: namely integration failure, agreement failure, and termination failure. In this note,  we consider only agreement failure since, in the DLT context, this failure turns out to be the most severe. In the strictest sense, an agreement failure occurs if not all nodes decide on the same opinion. We consider the $1\%$-agreement failure; such a failure occurs if at least $1\%$ of the nodes differ in their final decision.

\subsection{Results}

We choose the initial average opinion $p_0$ equal to the value of the first threshold $\tau$. This can be considered as the critical case since for values of $p_0\gg\tau$ or $p_0\ll \tau$, the agreement failure rate is so small that numerical simulation is no longer feasible.
The initial opinion is assigned as follows. The highest mana nodes that hold together more than $p_0$ of the mana are assigned opinion $1$ and the remaining opinion $0$.
Other default parameters are: $N=1000$ nodes, $p_0=\tau=0.66, \beta=0.3, \verb?l?=10, \verb?maxIt?=50$.

In Fig.\ \ref{fig:q} we investigate the protocol with a relatively small quorum size, $k=20$, and study  the agreement failure rate as a function of $q$.
\begin{figure}[ht]
\centering
     \includegraphics[width=0.8\textwidth,trim={0 0 0 0cm},clip]{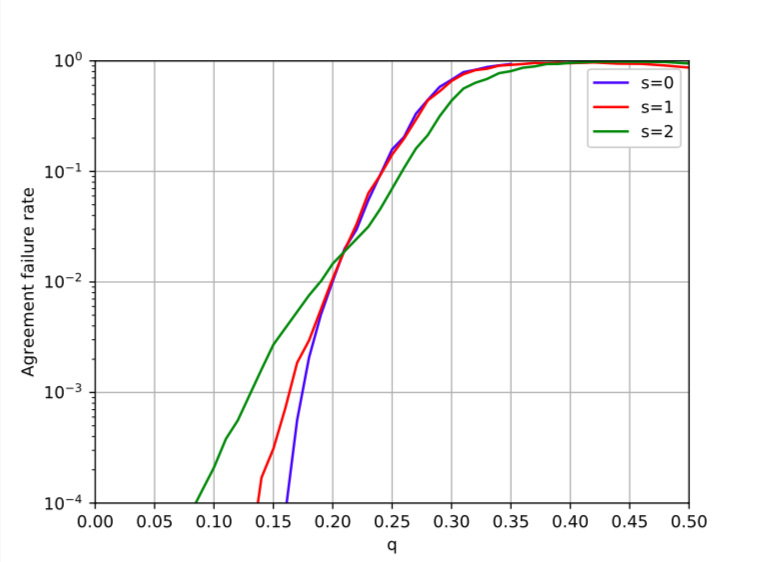}
  \caption{Agreement failure rates with $q$ for three different weight distributions and $k=20$ \newline}
  \label{fig:q}
\end{figure}
We see that the performance  depends on the parameter $s$; the higher the centralization is the lower the agreement failures are for high $q$, but at the price that the protocols performs worse if $q$ is smaller. In Fig. \ref{fig:k} we  observe an exponential decay of the agreement failure rate in the quorum size $k$.  

\begin{figure}[ht]
\centering
     \includegraphics[width=0.8\textwidth,trim={0 0 0 0cm},clip]{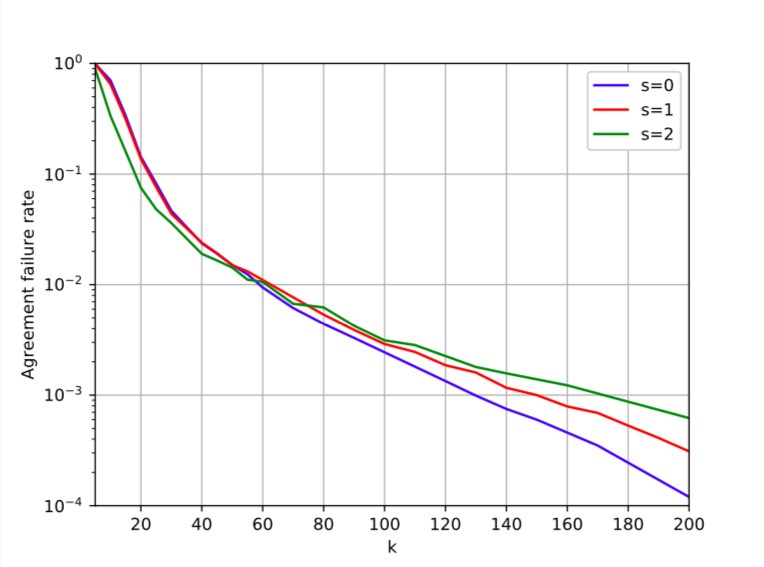}
  \caption{ Agreement failure rates with $k$; $q=0.25$.  \newline}
  \label{fig:k}
\end{figure}
The source code of the simulations is made open source and available online.\footnote{The link is not given in this version, since it might reveal the identity of the authors.}

\section{Discussion}\label{sec:discussion}

We proposed a mathematical framework to study fairness in consensus voting protocols and constructed a fair voting scheme, Section \ref{sec:fairness}. Even though this voting scheme is robust to splitting and merging, there are \emph{second order} effects that may incentivize nodes to optimize their weights. One example we studied concerns the message complexity and its consequences, Section \ref{sec:messageComplexity}. Other secondary effects that may influence the weights are, for instance; basic resource costs such as maintaining nodes, network redundancy, and service availability. Moreover, the fact that the security of the protocol has an impact on its performance may lead to changes in the weight distribution.

Nodes with high weight are likely no longer be anonymous. In the case where these nodes gossip their opinions, and other users widely accept their reputation, some nodes may decide not longer to take part in the consensus finding but to only follow the nodes with the highest reputations. This would give disproportional weight to nodes with high weight, leading to an unfair situation.

A complete mathematical treatment of the above is not realistic; all the more the evolution of such a permissionless and decentralized consensus protocol depends also on other components of the network, as well as economic and psychological elements. 
\bibliographystyle{ieeetr}
\bibliography{bibliography}

\end{document}